\newcommand{\N}{\mathbb{N}}
\newcommand{\R}{\mathbb{R}}
\newcommand{\C}{\mathbb{C}}
\newcommand{\K}{\mathbb{K}}
\newcommand{\M}[1]{\mathcal{M}_{#1}}   
\newcommand{\MSa}[1]{\mathcal{M}_{#1,\mathrm{sa}}} 
\newcommand{\MP}[1]{\mathcal{M}_{#1}^{+}}   
\newcommand{\MPP}[1]{\mathcal{M}_{#1}^{++}} 
\newcommand{\DN}[1]{\mathcal{D}_{#1}} 
\newcommand{\di}{\displaystyle}
\newcommand*{\eig}[1]{\mathrm{Eig}\left(#1\right)}
\newcommand*{\tr}{\mathop{\textrm{Tr}}\nolimits}
\newcommand*{\Ran}{\mathop{\textrm{Ran}}\nolimits}
\newcommand*{\Id}{\mathop{\textrm{I}}\nolimits}
\newcommand{\eps}{\varepsilon}
\newcommand{\Ch}{\mathcal{C}}
\newcommand{\bigslant}[2]{
  {\raisebox{.2em}{$#1$}\left/\raisebox{-.2em}{$#2$}\right.}} 
\newcommand{\sep}{\mathrm{sep}}
\newcommand{\supp}{\mathrm{supp}}
\newcommand{\ent}{\mathrm{ent}}
\newcommand{\Ker}{\mathrm{Ker}}
\newcommand{\Hilb}{\mathrm{H}}
\newcommand{\Tra}{\mathrm{T}}
\newcommand*{\gz}[1]{\left(#1\right)}
\newcommand*{\sz}[1]{\left[#1\right]}
\newcommand*{\kz}[1]{\left\{#1\right\}}
\newcommand*{\scal}[1]{\left\langle#1\right\rangle}
\let\@afterindentfalse\@afterindenttrue
\newtheorem{definition}{Definition}
\newtheorem{theorem}{Theorem}
\newtheorem{lemma}{Lemma}
\title{On the Notion of Quantum Copulas\thanks{
keywords: copula, quantum state, quantum channel, entanglement;
MSC2010: 81P16, 81P40.
}}
\author{
Attila Lovas\thanks{lovas@math.bme.hu}, 
Attila Andai\thanks{andaia@math.bme.hu},\\
Department for Mathematical Analysis, \\
Budapest University of Technology and Economics,\\
Stoczek u. 2, Budapest, H-1521, Hungary}
\date{\today}
\begin{document}

\maketitle

\begin{abstract}
Working with multivariate probability distributions Sklar introduced 
  the notion of copula in 1959, which turned out to be a key concept
  to understand the structure of distributions of composite systems.  
Roughly speaking Sklar proved that a joint distribution can be represented 
  with its marginals and a copula.
The main goal of this paper is to present a quantum analogue of the notion 
  of copula. 
Our main theorem states that for any state of a composite quantum system there 
  exists a unique copula such that the sate and the copula are connected to 
  each other by invertible matrices, moreover, they are both separable or both
  entangled.
So considering copulas instead of states is a separability preserving
  transformation and efficiently decreases the dimension of the 
  state space of composite systems.
The method how we prove these results draws attention to the fact 
  that theorem for states can be achieved by considering the states as quantum 
  channels and using theorems for channels and different kind of positive maps.
\end{abstract}

\section{Introduction}

A famous problem was proposed by Fr\'echet in 1951
  about multivariate probability distributions \cite{Frechet1951}.
Sklar managed to obtain an outstanding result in Fr\'echet's problem
  by introducing the notion of copula in 1959 \cite{Sklar1959},
  which is known as Sklar's theorem.
Roughly speaking Sklar proved that a joint distribution can be represented 
  with its marginals and a copula.
It means that the copula has all the dependence information of 
  random variables \cite{Nelsen2006}.

First let us recall the Sklar's theorem in the classical two dimensional 
  setting.
The theorem states that every multivariate distribution function
  $H:\R^{2}\to\sz{0,1}$ of a random vector $(X_{1},X_{2})$,
  that is
\begin{equation}
\label{eq:intro:Hdef}
 H(x_{1},x_{2})
  =\mathrm{Pr}\gz{X_{1}\leq x_{1},X_{2}\leq x_{2}}, 
\end{equation}
  can be expressed in terms of its marginals 
  $F_{i}(x_{i})=\mathrm{Pr}\gz{X_{i}\leq x_{i}}$
  and a copula function $C:\sz{0,1}^{2}\to\sz{0,1}$, such that
\begin{equation}
\label{eq:intro:coupolafunction}
H(x_{1},x_{2})=C\gz{F_{1}(x_{1}),F_{2}(x_{2})}
\end{equation}
  holds, moreover, the copula function is unique on the set
  $\Ran F_{1}\times\Ran F_{2}$.
So in the continuous case the copula function is unique.

In noncommutative probability theory the above mentioned notions and
  constructions are difficult to work with, but the idea behind Sklar's
  theorem can be adopted.
In quantum mechanical setting, an $n$-level quantum system can be described
  on an $n$ dimensional Hilbert space and the state space $\DN{n}$ of this 
  quantum system  can be identified by the set of self adjoint, 
  positive semi definite $n\times n$ matrices with trace one.
Quantum mechanical axioms postulate \cite{Neumann1932} that 
  the Hilbert space of a composite system is the tensor product of the 
  Hilbert spaces associated with the components.
If a composite quantum system consists of two subsystems
  with state spaces $\DN{n}$ and $\DN{m}$
  then the state space of the composite system is $\DN{mn}$. 
Transferring the idea of Sklar's theorem to quantum setting one finds that
  the role of classical distribution function (Eq. \ref{eq:intro:Hdef})
  is played by an element $\rho\in\DN{mn}$ and the marginals 
  can be identified by states generated by partial trace, namely
  by $\tr_{1}\rho\in\DN{m}$ and $\tr_{2}\rho\in\DN{n}$.
Equation \eqref{eq:intro:coupolafunction} suggests to interpret 
  the quantum copula as a state $C_{\rho}\in\DN{mn}$ with uniform marginals.
It means that $\tr_{1}C_{\rho}$ and $\tr_{2}C_{\rho}$ should be the most 
  mixed states, which can be written as $\tr_{1}C_{\rho}=\frac{1}{m}\Id_{m}$ 
  and  $\tr_{2}C_{\rho}=\frac{1}{n}\Id_{n}$, where $\Id_{n}$ denotes the
  $n\times n$ identity matrix.
We will call these states to precopulas and denote their set by $\Ch'_{mn}$.

We use the concept of Hilbert's pseudometric and the corresponding
  Birkhoff--Hopf theorem for strictly positive quantum channels to 
  construct a copula from a given state.
It turns out that for any state $\rho\in\DN{nm}$ there are unique matrices
  $A$ and $B$ up to multiplication by unitaries (i. e. one can consider
  $U_{0}A$ and $U_{1}B$ matrices too with unitaries $U_{0}$ and $U_{1}$)
  such that $(A\otimes B)\rho(A\otimes B)^{*}$ is a precopula.
It means that there is an extra unitary like freedom in the precopula 
  construction to a given state. 
This freedom induces an equivalence relation $\sim$ on precopulas.
We introduce the notion of copula in quantum setting as an equivalence
  class of precopulas.
The set of copulas is defined as $\Ch_{mn}=\bigslant{\Ch'_{mn}}{\sim}$.

We can give an affirmative answer for the question of uniqueness of copulas in
  the following way.
States $\rho,\rho'\in\DN{mn}$ are called to be connected  
 if there exist invertible matrices $A\in\M{n}$ and $B\in\M{m}$ such that
$\rho'=(A^{*}\otimes B^{*})\rho (A\otimes B)$.
Our main theorem states that for any state $\rho\in\DN{mn}$ there 
 exists a unique copula $\tilde{\chi}\in\Ch_{mn}$ such that $\rho$ and 
 any representative $\chi$ of $\tilde{\chi}$
 are connected, moreover, $\chi$ is separable if and only if $\rho$ is 
 separable.
So considering copulas instead of states is a separability preserving
  transformation, moreover the set of copulas are considerably lower dimensional 
  than the set of states and thus for bipartite systems, we can give a 
  simplified but exact picture of the separable-entangled structure of the state 
  space. 
For instance, in the theory of quantum steering, which is a type of quantum 
  correlation intermediate between entanglement and Bell nonlocality, the copula 
  of a two-qubit state is proved to be useful because its "steering" properties 
  are the same as for the state itself \cite{Bowles2016,Milne2014,shuming2016}.

\section{Preliminary lemmas and notations}

We use $\M{n}$ to denote $n\times n$ matrices, $\MSa{n}$ to denote the  
  self-adjoint elements of $\M{n}$, $\MP{n}$ the cone of self-adjoint positive 
  semi-definite ones \cite{NielsenChuang,Neumann1932,Petz2008}. 

The state space of an $n$-level quantum system arises as the intersection of 
  $\MP{n}$ and the hyperplane of trace one matrices, that is
\begin{equation*}
\overline{\DN{n}}=\kz{\rho\in\MP{n} \mid \tr (\rho)=1}.
\end{equation*}
Invertible elements in $\MP{n}$ are denoted by $\MPP{n}$ and 
  $\DN{n}$ stands for the set of invertible density matrices.
If we would like to emphasize the underlying field, then we write $\M{n}(\K)$, 
  $\MSa{n}(\K)$, $\MP{n}(\K)$, $\MPP{n}(\K)$, etc., where $\K=\R,\C$. 
The notation $\eig{A}$ stands for the set of eigenvalues of an element  
  $A\in\M{n}$.

Assume that a composite quantum system consists of two subsystems  
  with state spaces $\DN{n}$ and $\DN{m}$ respectively. 
According to the axioms of quantum mechanics, the state space of the composite 
  system is $\DN{mn}$ \cite{Neumann1932,Petz2008}. 
Product states in $\DN{mn}$ are of the form $\rho^1\otimes\rho^2$, where 
  $\rho^1\in\DN{n}$ and $\rho^2\in\DN{m}$.

\begin{definition}
A state $\rho\in\DN{mn}$ is \emph{separable} 
  (or \emph{classically correlated}), if it can be written as a convex 
  combination of product states i.e.
\begin{equation*}
\rho = \sum_{i=1}^r p_i \rho_i^1\otimes\rho_i^2,
\end{equation*} 
  where $\rho_i^1\in\DN{n}$, $\rho_i^2\in\DN{m}$and $\gz{p_i}_{1\le i\le r}$ 
  is a probability vector.
\end{definition}

Non-separable states are called entangled. 
We use the notations $\DN{mn}^\sep$ and $\DN{mn}^\ent$ for the set of separable 
  and entangled quantum states, respectively. 
Roughly speaking, separability means that the subsystems are in the state 
  $\rho_i^1$ and $\rho_i^2$ with probability $p_{i}$. 
The information theoretical aspect of entanglement observed by Schr\"odinger 
  himself: "Best possible knowledge of the whole does not include the best 
  possible knowledge of its parts." \cite{Petz2008}.   

\begin{definition}
A linear map $\Phi:\M{n}\to\M{m}$ is said to be
\begin{enumerate}[i.]
\item \emph{positive} if $\Phi (\MP{n})\subseteq\MP{m}$,
\item \emph{strictly positive} or \emph{positivity improving} if 
  $\Phi (\MP{n}\setminus\kz{0})\subseteq\MPP{m}$,
\item \emph{completely positive} if the map 
  $\Id_{\M{k}}\otimes \Phi: \M{k}\otimes\M{n}\to\M{k}\otimes\M{m}$ is positive 
  for all $k\in\N$, where $\Id_{\M{k}}$ is the $\M{k}\to\M{k}$ identity map.
\end{enumerate}
\end{definition}

Let us introduce the matrix units $(E_{ij})_{i,j=1,\dots,n}$, where 
  $E_{ij}$ denotes the matrix whose entries are all $0$ except in the $ij$-th 
  cell, where it is $1$.
One can associate to every linear map $\Phi:\M{n}\to\M{m}$ an element of 
  $\M{n}\otimes\M{m}\cong\M{mn}$ in the following way
\begin{equation*}
\rho_\Phi=\sum_{i,j=1}^{n} E_{ij}\otimes\Phi (E_{ij}).
\end{equation*}

According to Choi's theorem on completely positive maps, the map 
  $\Phi\mapsto\rho_{\Phi}$ establishes a bijection between completely positive 
  maps and positive semi-definite matrices. This correspondence is called 
  Choi--Jamio\l{}kowski isomorphism and its inverse is denoted by 
  $\rho\mapsto\Phi_\rho$. 

\begin{theorem}[Choi--Jamio\l{}kowski 
  \cite{Choi,Jamiolkowski}]
A linear map $\Phi:\M{n}\to\M{m}$ is completely positive if and only if the 
  matrix $\rho_{\Phi}$ is positive semi-definite.
\end{theorem}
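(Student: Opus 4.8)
\emph{Proof proposal.} The plan is to prove both implications directly, exploiting the block structure of $\rho_{\Phi}$ with respect to the first tensor factor together with the standard identification of vectors in $\C^{n}\otimes\C^{m}$ with linear maps $\C^{n}\to\C^{m}$.

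For the ``only if'' part, I would first observe that the matrix $\Omega=\sum_{i,j=1}^{n}E_{ij}\otimes E_{ij}\in\M{n}\otimes\M{n}$ is positive semi-definite, because $\Omega=\xi\xi^{*}$ with $\xi=\sum_{i=1}^{n}e_{i}\otimes e_{i}$, where $(e_{i})_{i=1}^{n}$ denotes the canonical basis of $\C^{n}$; in other words $\Omega$ is, up to the scalar $n$, the orthogonal projection onto the maximally entangled vector. Since $\rho_{\Phi}=\gz{\Id_{\M{n}}\otimes\Phi}(\Omega)$ and complete positivity of $\Phi$ entails in particular that $\Id_{\M{n}}\otimes\Phi$ is positive, we conclude $\rho_{\Phi}\in\MP{mn}$.

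For the converse, assume $\rho_{\Phi}\in\MP{mn}$ and fix a decomposition $\rho_{\Phi}=\sum_{k=1}^{r}v_{k}v_{k}^{*}$ into rank-one positive semi-definite matrices, obtained for instance from the spectral theorem (so $r\le mn$). Writing $v_{k}=\sum_{i=1}^{n}e_{i}\otimes\gz{V_{k}e_{i}}$ uniquely determines operators $V_{k}\colon\C^{n}\to\C^{m}$. Matching the $(i,j)$-block of $\rho_{\Phi}=\sum_{i,j}E_{ij}\otimes\Phi(E_{ij})$ with the $(i,j)$-block of $\sum_{k}v_{k}v_{k}^{*}$ then yields the reconstruction identity $\Phi(E_{ij})=\sum_{k=1}^{r}V_{k}E_{ij}V_{k}^{*}$ for all $i,j$, whence by linearity $\Phi(X)=\sum_{k=1}^{r}V_{k}XV_{k}^{*}$ for every $X\in\M{n}$. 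I expect this block-matching step --- keeping the index conventions for $\M{n}\otimes\M{m}\cong\M{mn}$ and for the vector--matrix correspondence consistent --- to be the only genuinely delicate point; the rest is formal.

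It then remains to check that every map in Kraus form $X\mapsto\sum_{k=1}^{r}V_{k}XV_{k}^{*}$ is completely positive. For any $\ell\in\N$ and $Y\in\MP{\ell n}\cong\M{\ell}\otimes\M{n}$ one has $\gz{\Id_{\M{\ell}}\otimes\Phi}(Y)=\sum_{k=1}^{r}\gz{\Id_{\M{\ell}}\otimes V_{k}}\,Y\,\gz{\Id_{\M{\ell}}\otimes V_{k}}^{*}$, and each summand is of the form $WYW^{*}$ with $Y\ge 0$, hence belongs to $\MP{\ell m}$; so does the sum. Therefore $\Id_{\M{\ell}}\otimes\Phi$ is positive for every $\ell$, i.e. $\Phi$ is completely positive, which closes the argument.
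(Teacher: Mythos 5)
The paper does not prove this statement --- it is quoted as a known result with citations to Choi and Jamio\l{}kowski --- so there is no internal proof to compare against. Your argument is the standard and correct proof of Choi's theorem: the ``only if'' direction via $\rho_{\Phi}=\gz{\Id_{\M{n}}\otimes\Phi}(\xi\xi^{*})$ with $\xi=\sum_{i}e_{i}\otimes e_{i}$, and the converse via a rank-one decomposition of $\rho_{\Phi}$, the vector--operator correspondence $v_{k}\leftrightarrow V_{k}$, block-matching to obtain the Kraus form $\Phi(X)=\sum_{k}V_{k}XV_{k}^{*}$, and the routine verification that Kraus-form maps are completely positive. All steps check out, including the delicate block-matching identity $v_{k}v_{k}^{*}=\sum_{i,j}E_{ij}\otimes V_{k}E_{ij}V_{k}^{*}$, so the proof is complete and consistent with the index conventions the paper uses for $\rho_{\Phi}$.
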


Let us denote the identity operator within $\M{m}$ by $\Id_{m}$.
Consider the map $\Psi_{nm}:\M{n}\to\M{m}$ defined by 
  $\Psi_{nm}(X) = \tr (X) \Id_m$, which is obviously strictly positive.
If $\Phi:\M{n}\to\M{m}$ is an arbitrary positive map 
  then for every $\eps>0$ the map $\Phi+\eps\Psi_{nm}$ is strictly positive. 
This observation leads to the following lemma. 

\begin{lemma}
\label{lem:posimprove}
The set $\{\Phi_\rho\mid\rho\in\DN{mn}\}$ consists of strictly positive maps.
\end{lemma}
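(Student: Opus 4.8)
The remark just before the lemma already contains the decisive point: adding a positive multiple of the strictly positive map $\Psi_{nm}$ to a positive map produces a strictly positive map. So the plan is to exhibit each $\Phi_\rho$, $\rho\in\DN{mn}$, in exactly this form. Observe first that $\Phi_\rho$ is automatically positive: $\rho\in\DN{mn}\subseteq\MP{mn}$, so by the Choi--Jamio\l{}kowski theorem $\Phi_\rho$ is completely positive, in particular positive. Hence only the strict improvement has to be produced, and this will come from the invertibility of $\rho$.

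The one computation I would carry out is the Choi matrix of $\Psi_{nm}$. Since $\Psi_{nm}(E_{ij})=\tr(E_{ij})\,\Id_m=\delta_{ij}\Id_m$, we get
\begin{equation*}
\rho_{\Psi_{nm}}=\sum_{i,j=1}^{n}E_{ij}\otimes\Psi_{nm}(E_{ij})=\sum_{i=1}^{n}E_{ii}\otimes\Id_m=\Id_n\otimes\Id_m=\Id_{mn}.
\end{equation*}
I would also record the elementary fact that $\Phi\mapsto\rho_\Phi$ is linear and injective (from $\rho_\Phi=0$ one reads off $\Phi(E_{ij})=0$ for all $i,j$, whence $\Phi=0$), so a map $\M{n}\to\M{m}$ is completely determined by its Choi matrix.

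Now fix $\rho\in\DN{mn}$. Since $\rho$ is positive and invertible, $c:=\min\eig{\rho}>0$ and $\sigma:=\rho-c\,\Id_{mn}\in\MP{mn}$. By Choi--Jamio\l{}kowski, $\Phi_\sigma$ is completely positive, hence positive. Consider the map $\Phi_\sigma+c\,\Psi_{nm}$; its Choi matrix is $\rho_{\Phi_\sigma}+c\,\rho_{\Psi_{nm}}=\sigma+c\,\Id_{mn}=\rho$, so by injectivity $\Phi_\rho=\Phi_\sigma+c\,\Psi_{nm}$. The right-hand side is a positive map plus a positive multiple of the strictly positive map $\Psi_{nm}$, hence strictly positive by the observation recalled above. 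As $\rho\in\DN{mn}$ was arbitrary, every $\Phi_\rho$ in the set is strictly positive.

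None of the steps is genuinely hard; the only thing demanding care is the bookkeeping identity $\rho_{\Psi_{nm}}=\Id_{mn}$ and the equally elementary remark that a linear map is recovered from its Choi matrix, which is what licenses the splitting $\Phi_\rho=\Phi_\sigma+c\,\Psi_{nm}$. A more hands-on alternative would be to make $\Phi_\rho$ explicit, e.g.\ $\Phi_\rho(X)=\tr_1\!\big((X^{\mathrm{T}}\otimes\Id_m)\rho\big)$, check that $\Phi_\rho(|g\rangle\langle g|)=W_g^{*}\rho W_g$ for the injective map $W_g\colon\psi\mapsto\bar g\otimes\psi$, and deduce strict positivity from injectivity of $W_g$ together with $\rho>0$, extending to arbitrary $X\in\MP{n}\setminus\kz{0}$ by the spectral decomposition; but the additive splitting above is shorter and reuses the machinery already set up in the excerpt.
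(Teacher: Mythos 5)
Your proposal is correct and follows essentially the same route as the paper: write $\rho=\sigma+c\,\Id_{mn}$ with $\sigma\in\MP{mn}$ and $c>0$, note that $\Phi_\sigma$ is positive by Choi--Jamio\l{}kowski, and conclude that $\Phi_\rho=\Phi_\sigma+c\,\Psi_{nm}$ is strictly positive. The only difference is that you explicitly verify $\rho_{\Psi_{nm}}=\Id_{mn}$ and invoke injectivity of the Choi map to license the splitting, a detail the paper leaves implicit.
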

\begin{proof}
If $\rho\in\DN{mn}$, then exists $\eps>0$ such that $\rho-\eps \Id_{mn}$ is still 
  positive semi-definite. 
By Choi--Kraus' theorem, $\Phi_{\rho-\eps \Id_{mn}}$ is completely positive, so
  positive.
Using the previous observation we get that
\begin{equation*}
\Phi_\rho = \Phi_{\rho-\eps \Id_{mn}}+\eps\Phi_{\Id_{mn}}
  =\Phi_{\rho-\eps \Id_{mn}}+\eps\Psi_{nm}
\end{equation*}
  is strictly positive.
\end{proof}

The Hilbert--Schmidt scalar product of matrices $X,Y\in\M{n}$ is defined by
\begin{equation*}
\scal{X,Y}=\tr (X^{*}Y).
\end{equation*}
For a linear map $\Phi:\M{n}\to\M{m}$ its adjoint  $\Phi^{*}:\M{m}\to\M{n}$ 
  with respect to the Hilbert--Schmidt scalar product is determined by 
  the equation
\begin{equation*}
\scal{\Phi(X),Y}=\scal{X,\Phi^{*}(Y)}
  \qquad \forall X\in\M{n}\ \forall Y\in\M{m}.
\end{equation*}

\begin{lemma}\label{lem:strpos}
If $\Phi:\M{n}\to\M{m}$ is a strictly positive linear map, then its
  Hilbert--Schmidt adjoint $\Phi^{*}:\M{m}\to\M{n}$ is also strictly positive.
\end{lemma}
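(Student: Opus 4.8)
The plan is to first recall the (standard) observation that the Hilbert--Schmidt adjoint of a positive map is again positive, and then to upgrade positivity to strict positivity through a kernel argument. The tool underlying both parts is the self-duality of the cone $\MP{n}$ under the Hilbert--Schmidt pairing: for $Z\in\MSa{n}$ one has $Z\in\MP{n}$ exactly when $\scal{P,Z}\ge 0$ for all $P\in\MP{n}$, and it already suffices to test this against the rank-one projections $P=vv^{*}$.

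First I would verify that $\Phi^{*}$ is positive. Since $\Phi$ is positive it is Hermiticity-preserving (write a self-adjoint matrix as a difference of two positive semi-definite ones), and a short duality computation then shows that $\Phi^{*}$ maps $\MSa{m}$ into $\MSa{n}$. Using $\tr(XY)=\tr\gz{Y^{1/2}XY^{1/2}}\ge 0$ for $X\in\MP{n}$, $Y\in\MP{m}$, we obtain $\scal{P,\Phi^{*}(Y)}=\scal{\Phi(P),Y}\ge 0$ for every $P\in\MP{n}$ and every $Y\in\MP{m}$, so the self-duality principle gives $\Phi^{*}(\MP{m})\subseteq\MP{n}$.

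For the main step, fix $Y\in\MP{m}\setminus\kz{0}$; by the previous paragraph $\Phi^{*}(Y)\in\MP{n}$, so it only remains to show that $\Phi^{*}(Y)$ is invertible. Suppose it is not; then there is a unit vector $v$ with $\Phi^{*}(Y)v=0$, and for the rank-one projection $P=vv^{*}\in\MP{n}\setminus\kz{0}$ we get
\begin{equation*}
0=v^{*}\Phi^{*}(Y)v=\scal{P,\Phi^{*}(Y)}=\scal{\Phi(P),Y}=\tr\gz{\Phi(P)\,Y},
\end{equation*}
where the last equality uses that $\Phi(P)$ is self-adjoint. On the other hand $P\neq 0$ and $\Phi$ is strictly positive, so $\Phi(P)\in\MPP{m}$; hence $\Phi(P)\ge\lambda\Id_{m}$ with $\lambda>0$ the smallest eigenvalue of $\Phi(P)$, and therefore
\begin{equation*}
\tr\gz{\Phi(P)\,Y}=\tr\gz{Y^{1/2}\Phi(P)\,Y^{1/2}}\ge\lambda\,\tr(Y)>0,
\end{equation*}
because $Y$ is a nonzero positive semi-definite matrix. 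This contradiction forces $\Ker\gz{\Phi^{*}(Y)}=\kz{0}$, i.e. $\Phi^{*}(Y)\in\MPP{n}$, which is what we wanted.

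I do not expect a genuine obstacle here. The only points that need a little care are the reduction of ``non-invertibility of $\Phi^{*}(Y)$'' to ``a rank-one projection lies in its kernel'' and the strict inequality $\tr\gz{\Phi(P)Y}>0$ — which is precisely where strict positivity of $\Phi$, rather than mere positivity, enters. In fact the whole argument can be packaged symmetrically: strict positivity of $\Phi$ is equivalent to $\Phi(Q)\in\MPP{m}$ for every rank-one projection $Q$, and the same characterization applied to $\Phi^{*}$, together with $\scal{P,\Phi^{*}(Q)}=\scal{\Phi(Q),P}$, yields the claim at once.
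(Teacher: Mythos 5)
Your proof is correct and follows essentially the same route as the paper: both arguments test $\Phi^{*}(Y)$ against a rank-one projection $P=vv^{*}$ built from an offending eigenvector, use the adjoint identity $\scal{P,\Phi^{*}(Y)}=\scal{\Phi(P),Y}$, and derive a contradiction from $\tr\gz{\Phi(P)Y}>0$, which is where strict positivity of $\Phi$ enters. The only difference is organizational — the paper handles positivity and invertibility in a single stroke by assuming an eigenvalue $\lambda\le 0$, whereas you split the argument into two steps.
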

\begin{proof}
Let $\Phi:\M{n}\to\M{m}$ be is a strictly positive linear map
  and $B\in\MP{m}\setminus\kz{0}$ be an arbitrary matrix.
Contrary to the lemma, assume that the matrix $\Phi^{*}(B)$ has $\lambda\leq 0$ 
  as eigenvalue with unit length eigenvector $v$.
Consider the orthogonal projection $P=v\scal{v,\cdot}$ onto 
  the one-dimensional subspace generated by $v$. 
Note that because of strict positivity of $\Phi$, we have $\Phi(P)>0$.
Then we get the following contradiction
\begin{align*}
0\geq \lambda&=\tr\gz{\lambda v\scal{v,\cdot}}
  =\tr\gz{\Phi^{*}(B) v\scal{v,\cdot}}=\tr\gz{\Phi^{*}(B)P}\\
&=\tr\gz{P \Phi^{*}(B)}
  =\scal{P,\Phi^{*}(B)}
  =\scal{\Phi(P),B}=\tr(\Phi(P)B)>0.
\end{align*}
\end{proof}

To a matrix $A\in\M{k}$ one can associate the left and right multiplication 
  operators $L_A, R_A:\M{k}\to\M{k}$ that act like
\begin{align*}
X \mapsto L_A (X) &= AX \\
X \mapsto R_A (X) &= XA.
\end{align*}

\begin{lemma}\label{lem:choitranform}
Let $A\in\M{n}$, $B\in\M{m}$ and $\Phi:\M{n}\to\M{m}$ be a linear map. 
Then we have the following identities.
\begin{eqnarray*}
&\rho_{L_B \circ\Phi} = (\Id_n\otimes B)\rho_{\Phi} \qquad
 &\rho_{R_B \circ\Phi} = \rho_{\Phi}(\Id_n\otimes B) \\
&\rho_{\Phi\circ L_A} = (A^{\Tra}\otimes \Id_m)\rho_{\Phi} \qquad 
 &\rho_{\Phi\circ R_A} = \rho_{\Phi}(A^{\Tra}\otimes \Id_m)
\end{eqnarray*}
\end{lemma}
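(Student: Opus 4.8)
The plan is to verify each of the four identities directly from the definition $\rho_{\Phi}=\sum_{i,j=1}^{n}E_{ij}\otimes\Phi(E_{ij})$, expanding everything in the matrix units and using only linearity of $\Phi$; no deeper input is required.

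For the two identities in which $B\in\M{m}$ acts on the second tensor leg there is nothing to track. Since $(L_{B}\circ\Phi)(E_{ij})=B\,\Phi(E_{ij})$, we get $\rho_{L_{B}\circ\Phi}=\sum_{i,j}E_{ij}\otimes B\,\Phi(E_{ij})=(\Id_{n}\otimes B)\sum_{i,j}E_{ij}\otimes\Phi(E_{ij})=(\Id_{n}\otimes B)\rho_{\Phi}$, and symmetrically $(R_{B}\circ\Phi)(E_{ij})=\Phi(E_{ij})\,B$ gives $\rho_{R_{B}\circ\Phi}=\rho_{\Phi}(\Id_{n}\otimes B)$.

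The identities involving $A\in\M{n}$ carry the real content, the only subtlety being the transpose $A^{\Tra}$ that appears on the first leg. I would handle this by writing $A=\sum_{k,l}A_{kl}E_{kl}$ and using $E_{kl}E_{ij}=\delta_{li}E_{kj}$, so that $(\Phi\circ L_{A})(E_{ij})=\Phi(AE_{ij})=\sum_{k}A_{ki}\,\Phi(E_{kj})$ and hence $\rho_{\Phi\circ L_{A}}=\sum_{i,j,k}A_{ki}\,E_{ij}\otimes\Phi(E_{kj})$. On the other side, $A^{\Tra}E_{kj}=\sum_{i}A_{ki}E_{ij}$, so $(A^{\Tra}\otimes\Id_{m})\rho_{\Phi}=\sum_{k,j}\gz{A^{\Tra}E_{kj}}\otimes\Phi(E_{kj})=\sum_{i,j,k}A_{ki}\,E_{ij}\otimes\Phi(E_{kj})$, which is the same double sum. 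The right-multiplication case is entirely analogous, now using $E_{ij}A=\sum_{l}A_{jl}E_{il}$ together with $E_{il}A^{\Tra}=\sum_{j}A_{jl}E_{ij}$, and it yields $\rho_{\Phi\circ R_{A}}=\rho_{\Phi}(A^{\Tra}\otimes\Id_{m})$.

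As a remark, all four identities admit a uniform explanation: writing $\rho_{\Phi}=(\Id_{\M{n}}\otimes\Phi)(\omega)$ with $\omega=\sum_{i,j}E_{ij}\otimes E_{ij}$, the statements reduce to the "ricochet" relations $(\Id_{n}\otimes A)\,\omega=(A^{\Tra}\otimes\Id_{m})\,\omega$ and $\omega\,(\Id_{n}\otimes A)=\omega\,(A^{\Tra}\otimes\Id_{m})$, together with the fact that left or right multiplication by $\Id_{n}\otimes B$ commutes with the partial map $\Id_{\M{n}}\otimes\Phi$. Either way there is no genuine obstacle here; the single point requiring care is seeing why it is $A^{\Tra}$, and not $A$, that surfaces on the first factor.
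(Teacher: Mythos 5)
Your proof is correct and follows essentially the same route as the paper: a direct expansion of $\rho_\Phi$ in the matrix units, with the $B$-identities immediate and the $A$-identities verified by tracking how left/right multiplication permutes the $E_{ij}$ (the paper specializes to $A=E_{kl}$ and extends by linearity, whereas you expand a general $A$ in components, which is the same computation). The closing ``ricochet'' remark is a nice conceptual summary, though note the identity there should read $(\Id_n\otimes A)\,\omega=(A^{\Tra}\otimes \Id_n)\,\omega$ since $\omega\in\M{n}\otimes\M{n}$.
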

\begin{proof}
These identities can be verified by direct calculations.
\begin{align*}
\rho_{L_B \circ\Phi}&=\sum_{i,j=1}^{n} E_{ij}\otimes B\Phi (E_{ij}) 
  =(\Id_n\otimes B)\sum_{i,j=1}^{n} E_{ij}\otimes \Phi (E_{ij}) 
  =(\Id_n\otimes B)\rho_{\Phi} \\
\rho_{R_B \circ\Phi} &= \sum_{i,j=1}^{n} E_{ij}\otimes \Phi (E_{ij})B 
  =\gz{\sum_{i,j=1}^{n} E_{ij}\otimes \Phi (E_{ij})} (\Id_n\otimes B) 
  =\rho_{\Phi}(\Id_n\otimes B)
\end{align*}
For $A=E_{kl}$, we have
\begin{equation*}
\rho_{\Phi\circ L_A}=\sum_{i,j=1}^{n} E_{ij}\otimes \Phi (E_{kl} E_{ij}) 
  =\sum_{j=1}^{n} E_{lj}\otimes \Phi (E_{kj}) 
  =\sum_{i,j=1}^{n} E_{lk}E_{ij}\otimes \Phi (E_{ij}) 
  =(A^{\Tra}\otimes \Id_m)\rho_{\Phi}
\end{equation*}
and similarly
\begin{equation*}
\rho_{\Phi\circ R_A} 
  =\sum_{i,j=1}^{n} E_{ij}\otimes \Phi (E_{ij}E_{kl}) 
  =\sum_{i=1}^{n} E_{ik}\otimes \Phi (E_{il}) \\
  =\sum_{i,j=1}^{n} E_{ij}E_{lk}\otimes \Phi (E_{ij}) 
  =\rho_{\Phi}(A^{\Tra}\otimes \Id_m).
\end{equation*}
The general case follows from this by the linearity of
  maps $A\mapsto L_A$ and $A\mapsto R_A$.
\end{proof}

For the set of positive semi-definite matrices we have
\begin{enumerate}[i.]
\item $\MP{n}+\MP{n}\subseteq\MP{n}$,
\item $\MP{n}\cap -\MP{n}=\{0\}$,
\item $\lambda\MP{n}\subseteq\MP{n}$ holds for all $\lambda\ge 0$,
\end{enumerate}
  which means $\MP{n}$ is a cone in the real vector space of self-adjoint 
  matrices i.e. $\MP{n}\subseteq\MSa{n}$.
The set $\MP{n}$ is a closed set with nonempty interior.
Now, we introduce the Hilbert metric \cite{Hilbert1895} on this cone.  
In the following we present the definitions and theorems to our settings. 
The general form of these results can be found for example in
  \cite{schr2015,Lemmens2012}.

\begin{definition}
Birkhoff's version of Hilbert's pseudometric is defined on 
  $\MP{n}\setminus\{0\}$ by
\begin{equation*}
d_{\Hilb}(A,B)=\log\gz{\frac{M(A,B)}{m(A,B)}},
\end{equation*}
where
\begin{align*}
M(A,B) &= \inf \kz{\lambda\in\R\mid \lambda B-A\in\MP{n}} \\
m(A,B) &= \sup \kz{\lambda\in\R\mid A-\lambda B\in\MP{n}}.
\end{align*}
\end{definition}

The pseudometric $d_{\Hilb}$ is symmetric, satisfies the triangle 
  inequality, for every $A,B\in\MP{n}\setminus\{0\}$ we have
  $d_{\Hilb}(A,B)=0$ if and only if there exists $c\in\R^{+}$ such that
  $A=cB$ and $d_{\Hilb}$ remains invariant under scaling by 
  positive constant i.e. 
  $d_{\Hilb}(A,B)=d_{\Hilb}(c A,B)=d_{\Hilb}(A,c B)$
  for every $c\in\R^{+}$. 
So $d_{\Hilb}$ is a \emph{projective metric}, it measures the distance of rays 
  and not elements.

Notice that for $A,B\in\MPP{n}$ positive-definite and invertible matrices, 
  one has 
\begin{equation*}
\eig{B^{-1/2}AB^{-1/2}}=\eig{AB^{-1}},
\end{equation*}
  where $\eig{X}$ denotes the set of eigenvalues of $X$. 
Therefore on the space of positive semi-definite matrices, the 
  Birkhoff--Hilbert metric has the following simpler form
\begin{equation*}
d_{\Hilb}(A,B)=\log\gz{\frac{\max (\eig{AB^{-1}})}{\min (\eig{AB^{-1}})}}.
\end{equation*}
Using this form, we can see that matrix inversion is an isometric map 
  on $\MPP{n}$ with respect to $d_{\Hilb}$. 
Similarly, for general positive matrices $A,B\in\MP{n}\setminus\{0\}$ one has
\begin{equation*}
d_{\Hilb}(A,B)=
\begin{cases}
\di \log\gz{
  \frac{\max (\eig{AB^{-1}})}{\min (\eig{AB^{-1}}\setminus\kz{0})}} & 
  \text{if}\ \supp A= \supp B, \\
\di  \infty & \text{if}\ \supp A\neq\supp B,
\end{cases}
\end{equation*}
  where $\supp X$ denotes the orthogonal subspace to $\Ker X$
  and the inverse is denoting the pseudo inverse (inverse on
  the support).
From this form we can see that the pseudo inversion is an isometric map 
  on $\MP{n}\setminus\kz{0}$ with respect to $d_{\Hilb}$.
Moreover, the space  $\MP{n}\setminus\kz{0}$ is complete with respect to
  $d_{\Hilb}$.
We will intensively use these observations in the proof of Theorem 
  \ref{thm:GP}.

\begin{definition}
Given a strictly positive linear mapping $\Phi:\MSa{n}\to\MSa{m}$, the \emph{projective diameter} of $\Phi$ is
\begin{equation*}
\Delta (\Phi) = \sup\kz{d_{\Hilb}(\Phi (A),\Phi (B))\middle| A,B\in\MPP{n}}.
\end{equation*}
Moreover, the \emph{Birkhoff contraction ratio} of $\Phi$ is given by
\begin{equation*}
\delta(\Phi) = \inf\kz{\lambda\in\R^{+}\middle|
  d_{\Hilb}(\Phi(A),\Phi(B))\le\lambda d_H(A,B)\text{ for all } A,B\in\MPP{n}}.
\end{equation*}
\end{definition}

Birkhoff-Hopf theorem on linear maps between cones provides upper bound 
  for the Birkhoff contraction ratio in terms of the projective diameter. 
This classical result was proved first by Birkhoff 
  \cite{Birkhoff1957,Birkhoff1962} and
  similar theorems were discovered by Hopf \cite{Hopf1963,Hopf1963Remark} 
  who was apparently unaware of Birkhoff's work.
It is valid in much more general settings, but the following restricted version 
  will be enough for our purposes.

\begin{theorem}[Birkhoff-Hopf]
\label{thm:Birkhoff} 
If $\Phi:\MSa{n}\to\MSa{m}$ is a strictly positive linear map, then
\begin{equation*}
\delta (\Phi)=\tanh\gz{\frac{1}{4}\Delta (\Phi)},
\end{equation*}
  where $\tanh (\infty)=1$.
\end{theorem}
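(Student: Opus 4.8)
The plan is to prove the two inequalities $\delta(\Phi)\le\tanh\gz{\frac14\Delta(\Phi)}$ and $\delta(\Phi)\ge\tanh\gz{\frac14\Delta(\Phi)}$ separately, by the elementary route: a barycentric decomposition along Hilbert segments, followed by an explicit scalar optimization. Set $q:=e^{\Delta(\Phi)/2}$, so $\tanh\gz{\frac14\Delta(\Phi)}=\frac{q-1}{q+1}$. Note first that, since the trace-one cross-section of $\MP{n}$ is compact and $\Phi$ is strictly positive, $\Phi$ maps it into a compact subset of the open cone $\MPP{m}$, on which $d_{\Hilb}$ is bounded (it is continuous on $\MPP{m}\times\MPP{m}$); by scale invariance this gives $\Delta(\Phi)<\infty$, so the convention $\tanh(\infty)=1$ is only a formality. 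Note also that plain positivity already gives $M(\Phi A,\Phi B)\le M(A,B)$ and $m(\Phi A,\Phi B)\ge m(A,B)$, hence $d_{\Hilb}(\Phi A,\Phi B)\le d_{\Hilb}(A,B)$, i.e.\ $\delta(\Phi)\le1$.

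For the upper bound, fix $A,B\in\MPP{n}$ with $d_{\Hilb}(A,B)>0$, put $a=m(A,B)$ and $b=M(A,B)$ (so $0<a<b$), and write
\begin{equation*}
A=aP+bQ,\qquad B=P+Q,\qquad P=\frac{bB-A}{b-a},\quad Q=\frac{A-aB}{b-a},
\end{equation*}
where $P,Q\in\MP{n}\setminus\kz{0}$ by the very definitions of $m$ and $M$. By strict positivity $\Phi P,\Phi Q\in\MPP{m}$; put $\mu=m(\Phi P,\Phi Q)$ and $\nu=M(\Phi P,\Phi Q)$, so $0<\mu\le\nu$ and $\nu/\mu=e^{d_{\Hilb}(\Phi P,\Phi Q)}\le q^{2}$ (the last step by approximating $P,Q$ with $P+\eps\Id_{n},Q+\eps\Id_{n}\in\MPP{n}$ and using continuity of $d_{\Hilb}$ at $(\Phi P,\Phi Q)$). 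Since $\Phi A=a\Phi P+b\Phi Q$ and $\Phi B=\Phi P+\Phi Q$, testing the positivity of $\lambda\Phi B-\Phi A=(\lambda-a)\Phi P+(\lambda-b)\Phi Q$ gives $M(\Phi A,\Phi B)=\frac{b+\mu a}{1+\mu}$, and similarly $m(\Phi A,\Phi B)=\frac{b+\nu a}{1+\nu}$; hence, with $r=b/a=e^{d_{\Hilb}(A,B)}$,
\begin{equation*}
d_{\Hilb}(\Phi A,\Phi B)=\log\frac{(r+\mu)(1+\nu)}{(1+\mu)(r+\nu)}.
\end{equation*}

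The core of the argument is now to bound this over the admissible $\mu,\nu$. The expression is increasing in $\nu$ on $\nu\ge\mu$, so we may put $\nu=q^{2}\mu$; a short calculation shows that $\mu\mapsto\frac{(r+\mu)(1+q^{2}\mu)}{(1+\mu)(r+q^{2}\mu)}$ is maximized over $\mu>0$ at $\mu=\sqrt{r}/q$, with value $\gz{\frac{1+q\sqrt{r}}{q+\sqrt{r}}}^{2}$. Hence $d_{\Hilb}(\Phi A,\Phi B)\le 2\log\frac{1+q\sqrt{r}}{q+\sqrt{r}}$, and it suffices to prove the scalar inequality $2\log\frac{1+q\sqrt{r}}{q+\sqrt{r}}\le\frac{q-1}{q+1}\log r$ for $r\ge1$, $q\ge1$. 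Writing $x=\frac12\log r\ge0$, the function $h(x):=\log(1+qe^{x})-\log(q+e^{x})$ satisfies $h(0)=0$, $h'(0)=\frac{q-1}{q+1}$, and $h''(x)=qe^{x}\sz{(1+qe^{x})^{-2}-(q+e^{x})^{-2}}\le 0$ since $1+qe^{x}\ge q+e^{x}$ when $q,e^{x}\ge1$; concavity then gives $h(x)\le\frac{q-1}{q+1}x$, which is exactly the claimed inequality. Therefore $d_{\Hilb}(\Phi A,\Phi B)\le\tanh\gz{\frac14\Delta(\Phi)}\,d_{\Hilb}(A,B)$ and so $\delta(\Phi)\le\tanh\gz{\frac14\Delta(\Phi)}$. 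I expect the critical-point computation together with this concavity estimate to be the main technical point of the whole proof.

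For the reverse inequality, one runs the construction backwards. Pick $P,Q\in\MP{n}\setminus\kz{0}$ whose supports are incomparable and span $\C^{n}$ — then $m(P,Q)=0$, $M(P,Q)=\infty$, the matrices $A,B$ below are invertible, and $d_{\Hilb}(P,Q)=\infty$ — with $t:=e^{d_{\Hilb}(\Phi P,\Phi Q)}$ as close to $q^{2}$ as one wishes; that $\Delta(\Phi)$ is approached by such pairs is where the geometry of the matrix cone enters (by Bauer's maximum principle, $d_{\Hilb}(x,\cdot)$ being quasiconvex because Hilbert balls are convex, the diameter is attained on rank-one elements, which can then be fattened to incomparable supports spanning $\C^{n}$ without losing much), and this is the second delicate step. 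Rescaling $P$ by a positive scalar we may arrange $m(\Phi P,\Phi Q)=\sqrt{r}/\sqrt{t}$ for a chosen $r>1$; set $A=aP+bQ$, $B=P+Q$ with $b/a=r$. Incomparability of supports gives $m(A,B)=a$ and $M(A,B)=b$, so $d_{\Hilb}(A,B)=\log r$, while the formula above (now an equality) gives $d_{\Hilb}(\Phi A,\Phi B)=2\log\frac{1+\sqrt{rt}}{\sqrt{r}+\sqrt{t}}$. Letting $r\downarrow1$, the ratio $d_{\Hilb}(\Phi A,\Phi B)/d_{\Hilb}(A,B)$ tends to $\frac{\sqrt{t}-1}{\sqrt{t}+1}$, which approaches $\frac{q-1}{q+1}=\tanh\gz{\frac14\Delta(\Phi)}$ as $t\uparrow q^{2}$; hence $\delta(\Phi)\ge\tanh\gz{\frac14\Delta(\Phi)}$, and together with the upper bound this proves the theorem.
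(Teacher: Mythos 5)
The paper does not actually prove this theorem: it is imported as a classical result of Birkhoff and Hopf with citations, so there is no in-paper argument to compare yours against. Judged on its own, your proof is essentially the standard Birkhoff argument (as in Birkhoff's papers and in Lemmens--Nussbaum, specialized to the cone $\MP{n}$), and its computational core is correct: the decomposition $A=aP+bQ$, $B=P+Q$ with $P,Q\in\MP{n}\setminus\{0\}$, the exact formulas $M(\Phi A,\Phi B)=\frac{b+\mu a}{1+\mu}$ and $m(\Phi A,\Phi B)=\frac{b+\nu a}{1+\nu}$, the optimization at $\mu=\sqrt r/q$ with value $\bigl(\frac{1+q\sqrt r}{q+\sqrt r}\bigr)^2$, and the concavity estimate for $h(x)=\log(1+qe^{x})-\log(q+e^{x})$ all check out, as does the finiteness of $\Delta(\Phi)$ via compactness of the trace-one slice (the same observation the paper uses inside the proof of Theorem \ref{thm:GP}).

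The one place where you assert more than you prove is the lower bound's reduction to pairs $P,Q$ with incomparable, jointly spanning supports realizing $d_{\Hilb}(\Phi P,\Phi Q)$ arbitrarily close to $\Delta(\Phi)$: you invoke convexity of Hilbert balls, Bauer's maximum principle for quasiconvex functions, and an unquantified ``fattening''. These facts are true and citable, but as written this is the soft spot of the argument, and you can bypass it entirely with tools you already have. Take any $A,B\in\MPP{n}$ with $d_{\Hilb}(\Phi A,\Phi B)$ near $\Delta(\Phi)$ and decompose them as in your upper bound: the resulting $P=\frac{bB-A}{b-a}$ and $Q=\frac{A-aB}{b-a}$ are automatically singular (because $b$ is an infimum, $a$ a supremum, and $B\in\MPP{n}$), hence have incomparable supports since $P+Q=B$ is invertible; moreover your own identity gives
\begin{equation*}
e^{d_{\Hilb}(\Phi A,\Phi B)}=\frac{(r+\mu)(1+\nu)}{(1+\mu)(r+\nu)}\le\frac{\nu}{\mu}=e^{d_{\Hilb}(\Phi P,\Phi Q)},
\end{equation*}
because after cross-multiplication the difference of the two sides is $(\nu-\mu)(r+\mu+\nu+\mu\nu)\ge 0$. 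So pairs $P,Q$ of the required form with $d_{\Hilb}(\Phi P,\Phi Q)$ as close to $\Delta(\Phi)$ as desired exist for elementary reasons, and the rest of your lower-bound computation (the exact value $2\log\frac{1+\sqrt{rt}}{\sqrt r+\sqrt t}$ and the limit $\frac{\sqrt t-1}{\sqrt t+1}$ as $r\downarrow 1$) goes through verbatim. With that substitution the proof is complete, self-contained, and elementary.
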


Finally we cite Sinkhorn's theorem, since our main theorem is a similar 
  result in a similar setting.

\begin{theorem}[Sinkhorn \cite{Sinkhorn1964}]
\label{thm:Sinkhorn}
If $A$ is an $n\times n$ matrix with strictly positive elements, 
  then there exist diagonal matrices $D_{1}$ and $D_{2}$ with 
  strictly positive diagonal elements such that $D_{1}AD_{2}$ is doubly 
  stochastic. 
The matrices $D_{1}$ and $D_{2}$ are unique modulo multiplying the 
  first matrix by a positive number and dividing the second one by the same
  number. 
\end{theorem}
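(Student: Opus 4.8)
The plan is to recast Sinkhorn's scaling as a fixed-point problem on the interior of the positive orthant and to solve it with the projective-metric machinery developed above, applied to the cone $\R^{n}_{+}$ of vectors with nonnegative entries. Identify a positive diagonal matrix with the vector $u$ of its diagonal entries; then $\R^{n}_{+}$ is a closed cone with nonempty interior $\R^{n}_{++}$, on which Hilbert's metric reads
\begin{equation*}
d_{\Hilb}(u,v)=\log\gz{\frac{\max_{i}(u_{i}/v_{i})}{\min_{i}(u_{i}/v_{i})}},
\end{equation*}
and, by exactly the computation that makes matrix inversion an isometry in the text, the coordinatewise inversion $\iota\colon u\mapsto(u_{i}^{-1})_{i}$ is a $d_{\Hilb}$-isometry of $\R^{n}_{++}$. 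Since all entries of $A$ are strictly positive, the linear map $u\mapsto Au$ sends $\R^{n}_{+}\setminus\kz{0}$ into $\R^{n}_{++}$, so it is strictly positive; and it has finite projective diameter (a classical consequence of positivity of all entries). The same holds for $A^{\Tra}$, so by the Birkhoff--Hopf theorem (Theorem~\ref{thm:Birkhoff}, applied on this cone) the associated Birkhoff contraction ratios $\delta(A)$ and $\delta(A^{\Tra})$ are both $<1$.

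Next I would make the elementary observation that for $x,y\in\R^{n}_{++}$ with $D_{1}=\mathrm{diag}(x)$ and $D_{2}=\mathrm{diag}(y)$, all row sums of $D_{1}AD_{2}$ equal $1$ exactly when $x=\iota(Ay)$, and, once that holds, all column sums equal $1$ exactly when $y=\iota(A^{\Tra}x)$. Hence $D_{1}AD_{2}$ is doubly stochastic if and only if $y$ is a fixed point of
\begin{equation*}
T:=\iota\circ A^{\Tra}\circ\iota\circ A\colon\R^{n}_{++}\longrightarrow\R^{n}_{++},
\end{equation*}
with $x=\iota(Ay)$ then determined by $y$. Being a composition of two strictly positive linear maps and two isometries, $T$ obeys $d_{\Hilb}(Tu,Tv)\le\delta(A)\,\delta(A^{\Tra})\,d_{\Hilb}(u,v)$ with $\delta(A)\delta(A^{\Tra})<1$, and it is positively homogeneous of degree one: $T(cu)=cT(u)$ for every $c>0$.

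Existence and uniqueness then come from the Banach fixed point theorem. By homogeneity $T$ descends to the space of rays of $\R^{n}_{++}$, which is complete for $d_{\Hilb}$ (as noted for $\MP{n}\setminus\kz{0}$), and there it is a strict contraction, so it has a unique fixed ray $[\bar y]$; thus $T\bar y=\lambda\bar y$ for some $\lambda>0$ independent of the chosen representative. Putting $\bar x=\iota(A\bar y)$ one has $\bar x_{i}(A\bar y)_{i}=1$ for all $i$ and $\lambda\bar y_{j}(A^{\Tra}\bar x)_{j}=1$ for all $j$; summing each over its index and using $\sum_{i}\bar x_{i}(A\bar y)_{i}=\sum_{i,j}\bar x_{i}A_{ij}\bar y_{j}=\sum_{j}\bar y_{j}(A^{\Tra}\bar x)_{j}$ forces $\lambda=1$. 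Hence every representative of $[\bar y]$ is a genuine fixed point of $T$, and $D_{1}=\mathrm{diag}(\iota(A\bar y))$, $D_{2}=\mathrm{diag}(\bar y)$ make $D_{1}AD_{2}$ doubly stochastic. Conversely any doubly stochastic $D_{1}AD_{2}$ forces $\mathrm{diag}(D_{2})$ to be a fixed point of $T$, hence to lie on $[\bar y]$; replacing $D_{2}$ by $cD_{2}$ then replaces $D_{1}$ by $c^{-1}D_{1}$ via $x=\iota(Ay)$, which is precisely the asserted uniqueness modulo scaling.

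I expect the delicate point to be the passage from a fixed \emph{ray} to an honest fixed \emph{point}: this is where the special ``conservation of mass'' structure of the problem enters, through the identity that pins $\lambda$ to $1$. The remaining care is merely in transferring the isometry, completeness and Birkhoff--Hopf statements, phrased above for $\MP{n}$, to the subcone $\R^{n}_{+}$; this is routine, since $\R^{n}_{+}$ is itself a closed cone with nonempty interior and Birkhoff--Hopf holds in that generality \cite{Birkhoff1957,schr2015,Lemmens2012}. I note in passing that this argument is the classical template for the paper's main theorem: there the scaling $D_{1}(\cdot)D_{2}$ is replaced by the conjugation $(A\otimes B)(\cdot)(A\otimes B)^{*}$ on $\DN{mn}$, the map $T$ by a composition of the Choi transformations of Lemma~\ref{lem:choitranform} with pseudo-inversions, and the bound $\delta(A)\delta(A^{\Tra})<1$ by the Birkhoff contraction ratio of the strictly positive channel $\Phi_{\rho}$ furnished by Lemma~\ref{lem:posimprove} and Theorem~\ref{thm:Birkhoff}.
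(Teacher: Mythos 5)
Your argument is correct. Note first that the paper offers no proof of this statement at all: it is quoted as a classical result with a citation to Sinkhorn's 1964 paper, whose original argument is an iterative row/column normalization with a direct convergence analysis, quite different from what you do. What you have written is instead the Hilbert-metric proof (in the style of Franklin--Lorenz and of \cite{schr2015}), and it is essentially the commutative specialization of the paper's own proof of Theorem~\ref{thm:GP}: the cone $\MP{n}$ is replaced by $\R^{n}_{+}$, the channel $\Phi_{\rho}$ and its Hilbert--Schmidt adjoint by $A$ and $A^{\Tra}$, matrix inversion by coordinatewise inversion, and the trace identity that forces $\lambda=n/m$ there becomes your summation identity $\sum_{i,j}\bar x_{i}A_{ij}\bar y_{j}$ counted two ways, forcing $\lambda=1$. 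All the individual steps check out: the equivalence of double stochasticity with the fixed-point equation $y=T(y)$, the degree-one homogeneity of $T$, the bound $\delta(A)\delta(A^{\Tra})<1$ from finiteness of the projective diameter of an entrywise positive matrix, completeness of the ray space of $\R^{n}_{++}$, and the recovery of an honest fixed point from the fixed ray. What your route buys, beyond the paper's bare citation, is exactly the unification you point out in your last sentence: it makes visible that the paper's main theorem is the noncommutative Sinkhorn theorem, proved by the same contraction scheme. The only cost is that you must invoke Birkhoff--Hopf and the completeness/isometry facts on the cone $\R^{n}_{+}$ rather than on $\MP{n}$ as literally stated in the text; as you say, this transfer is routine since those results hold for general closed cones with nonempty interior.
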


\section{From states to copulas}

We define an equivalence relation among bipartite quantum states 
  such that each equivalence class consists only of either separable 
  or entangled states.

\begin{definition}
We say that $\rho\in\DN{mn}$ and $\rho'\in\DN{mn}$ are \emph{connected} 
  if there exist invertible matrices $A\in\M{n}$ and $B\in\M{m}$ such that
\begin{equation*}
\rho'=(A^{*}\otimes B^{*})\rho (A\otimes B).
\end{equation*} 
Connectivity of states $\rho$ and $\rho'$ is denoted by
  $\rho\leftrightsquigarrow\rho'$.
\end{definition}

Easy to check the following observation about the connecticity of states.

\begin{lemma}
Connected states are mutually separable or entangled.
\end{lemma}

Now let us define the first candidates to copulas, namely
  those composite states which have uniform marginals.

\begin{definition}
A state $\rho\in\DN{nm}$ is called as a precopula if it has uniform 
  marginals.
The set of precopulas is denoted by $\Ch'_{mn}$, that is
\begin{equation*}
\Ch'_{mn}=\kz{\rho\in\DN{mn}\middle| \tr_1 (\rho)=\frac{1}{m}\Id_m,\,
  \tr_2 (\rho)=\frac{1}{n}\Id_n}.
\end{equation*}
For states $\rho,\rho'\in\Ch'_{mn}$ we write $\rho\sim\rho'$ if and only if 
  there exist unitaries $U\in\M{n}$ and $V\in\M{m}$ such that
\begin{equation*}
 \rho'=(U^{*}\otimes V^{*})\rho (U\otimes V).
\end{equation*}  
Since, $\rho$ is an equivalence relation on $\Ch'_{mn}$ 
  we can take the quotient space
\begin{equation*}
\Ch_{mn}=\bigslant{\Ch'_{mn}}{\sim}.
\end{equation*}
Elements of the quotient space $\Ch_{mn}$ are called \emph{copulas}.
\end{definition}
 
To prove our main result we use the fact that the marginals of a state
  can be obtained as applying the quantum channel and its adjoint
  defined by the state to the identity matrix.

\begin{lemma}
\label{lem:chn} 
A density matrix $\rho\in\DN{nm}$ has marginals
  $\rho^{(1)}\in\DN{n}$ and $\rho^{(2)}\in\DN{m}$ if and only if
\begin{align*}
\Phi_{\rho} (\Id_n)    &= \rho^{(2)} \\
\Phi_{\rho}^{*}(\Id_m) &= \rho^{(1)}.
\end{align*}
\end{lemma}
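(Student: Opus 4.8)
The plan is to unwind both identities directly, using only the Choi--Jamio\l{}kowski expansion $\rho=\sum_{i,j=1}^{n}E_{ij}\otimes\Phi_\rho(E_{ij})$ together with the definitions of the partial traces and of the Hilbert--Schmidt adjoint. This expansion exhibits the matrices $\Phi_\rho(E_{ij})\in\M{m}$ as precisely the $m\times m$ blocks of $\rho$, so the whole lemma becomes bookkeeping with traces of these blocks. Observe also that, since $\rho$ is determined by $\Phi_\rho$, the phrase ``$\rho$ has marginals $\rho^{(1)},\rho^{(2)}$'' means by definition $\tr_2\rho=\rho^{(1)}$ and $\tr_1\rho=\rho^{(2)}$; hence both directions of the ``if and only if'' follow at once once we prove the two identities $\tr_1\rho=\Phi_\rho(\Id_n)$ and $\tr_2\rho=\Phi_\rho^{*}(\Id_m)$, and that is what I would establish.

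For the first identity I would trace out the first tensor factor block by block: $\tr_1\bigl(E_{ij}\otimes\Phi_\rho(E_{ij})\bigr)=\tr(E_{ij})\,\Phi_\rho(E_{ij})$, so $\tr_1\rho=\sum_{i=1}^{n}\Phi_\rho(E_{ii})=\Phi_\rho\gz{\sum_{i=1}^{n}E_{ii}}=\Phi_\rho(\Id_n)$ by linearity of $\Phi_\rho$ and $\Id_n=\sum_i E_{ii}$. For the second identity, $\tr_2\rho$ is the $n\times n$ matrix whose $(i,j)$ entry is $\tr\Phi_\rho(E_{ij})$; I would rewrite this trace as a Hilbert--Schmidt pairing and move it across the adjoint, $\tr\Phi_\rho(E_{ij})=\scal{\Id_m,\Phi_\rho(E_{ij})}=\scal{\Phi_\rho^{*}(\Id_m),E_{ij}}$. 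Since $\rho$ is self-adjoint, $\Phi_\rho$ carries self-adjoint matrices to self-adjoint matrices, hence so does $\Phi_\rho^{*}$ (indeed $\scal{X,\Phi_\rho^{*}(\Id_m)}=\tr\Phi_\rho(X)$ is real for self-adjoint $X$), so $\Phi_\rho^{*}(\Id_m)$ is self-adjoint; pairing it against all the matrix units $E_{ij}$ then simply reads off its entries and yields the identity $\Phi_\rho^{*}(\Id_m)=\rho^{(1)}$.

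The one step that needs genuine care is this adjoint computation: one must keep the convention $\scal{X,Y}=\tr(X^{*}Y)$ consistent through the block identification, because the Choi isomorphism carries an implicit transpose and an unchecked index swap misplaces it on $\tr_2\rho$; apart from that there is no real obstacle, the rest being a routine unwinding of definitions. It is worth recording what the lemma is for: combined with the definition of a precopula it identifies $\Ch'_{mn}$ with the set of Choi matrices of those $\Phi_\rho$ that are, after normalization, unital \emph{and} have unital adjoint---the quantum counterpart of doubly stochastic matrices---which is exactly the class to which the Birkhoff--Hopf theorem (Theorem \ref{thm:Birkhoff}) and the Sinkhorn-type argument are subsequently applied.
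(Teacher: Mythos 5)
Your proof is correct and follows essentially the same route as the paper's: $\tr_{1}\rho=\sum_{i}\Phi_{\rho}(E_{ii})=\Phi_{\rho}(\Id_{n})$ for the second marginal, and the Hilbert--Schmidt pairing $\tr\Phi_{\rho}(E_{ij})=\scal{\Phi_{\rho}^{*}(\Id_{m}),E_{ij}}$ for the first. The transpose subtlety you flag is genuine but is equally present in the paper's own final step --- since $\scal{A,E_{ij}}=\overline{A_{ij}}=A_{ji}$ for self-adjoint $A$, both your argument and the paper's literally produce $(\Phi_{\rho}^{*}(\Id_{m}))^{\Tra}$ rather than $\Phi_{\rho}^{*}(\Id_{m})$ --- which is harmless in the sequel because there the lemma is only invoked with $\Phi_{\rho}^{*}(\Id_{m})=\tfrac{1}{n}\Id_{n}$, a symmetric matrix.
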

\begin{proof}
The second marginal can be expressed as
\begin{align*}
\rho^{(2)}=\tr_1 \rho = \sum_{i,j=1}^n \tr (E_{ij})\Phi_{\rho}(E_{ij}) 
  =\sum_{i=1}^n\Phi_{\rho}(E_{ii}) = \Phi_{\rho} (\Id_n).
\end{align*}
For the first marginal, we have
\begin{align*}
\rho^{(1)}=\tr_2 \rho=\sum_{i,j=1}^n E_{ij}\tr\gz{\Phi_{\rho}(E_{ij})}
  =\sum_{i,j=1}^n E_{ij} \tr\gz{\Phi_{\rho}^{*}(\Id_m)E_{ij}},
\end{align*}
  where the last expression is nothing else but the matrix 
  $\Phi_{\rho}^{*}(\Id_m)$ in the basis $(E_{ij})_{i,j=1,\dots,n}$, which
  verifies the equality $\rho^{(1)}=\Phi_{\rho}^{*}(\Id_m)$.
\end{proof}

According to the previous lemma the density matrix $\rho\in\DN{nm}$
  is a precopula if and only if $\Phi_\rho (\Id_n)=\frac{1}{m}\Id_{m}$
  and  $\Phi^{*}_{\rho}(\Id_m)=\frac{1}{n}\Id_{n}$.
The next theorem is about a kind of fixed point property of strictly
  positive maps.
We have modified and adopted to our settings the theorem which appeared 
  first in \cite{schr2015}. 

\begin{theorem}
\label{thm:GP} 
For any strictly positive linear map $\Phi:\M{n}\to\M{m}$ there exist unique
  matrices $\varphi_0\in\MPP{n}$ and $\varphi_1\in\MPP{m}$
  up to a positive multiplicative constant such that
\begin{align*}
\Phi (\varphi_0^{-1}) &= \frac{1}{m} \varphi_1^{-1} \\
\Phi^{*}(\varphi_1) &= \frac{1}{n} \varphi_0.
\end{align*}
\end{theorem}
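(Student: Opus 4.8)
The plan is to encode the two equations as a single fixed point problem on the cone $\MPP{n}$ and solve it with the Birkhoff--Hopf theorem and the contraction mapping principle in Hilbert's projective metric. Solving the first equation for $\varphi_1$ forces $\varphi_1=\frac1m\bigl(\Phi(\varphi_0^{-1})\bigr)^{-1}$, which indeed lies in $\MPP{m}$ since $\Phi$ is strictly positive and inversion preserves $\MPP{m}$. Substituting this into the second equation, the whole system becomes $G(\varphi_0)=\frac mn\,\varphi_0$ for the map
\begin{equation*}
G\colon\MPP{n}\longrightarrow\MPP{n},\qquad
G(\psi)=\Phi^{*}\!\Bigl(\bigl(\Phi(\psi^{-1})\bigr)^{-1}\Bigr),
\end{equation*}
which is well defined because $\Phi$ is strictly positive and $\Phi^{*}$ is strictly positive by Lemma~\ref{lem:strpos}. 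Note that $G$ is homogeneous of degree one, $G(t\psi)=tG(\psi)$ for $t>0$, so it descends to the set of rays of $\MPP{n}$; on this set $d_{\Hilb}$ is a genuine metric, complete by the observations recalled before the statement of the theorem. (The associated iteration $\psi\mapsto\frac nm\,G(\psi)$ is precisely the matrix analogue of Sinkhorn's alternating rescaling.)

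The key step is that $G$ is a strict $d_{\Hilb}$-contraction on the space of rays. Writing $G=\Phi^{*}\circ\iota_{m}\circ\Phi\circ\iota_{n}$, where $\iota_{k}$ denotes inversion on $\MPP{k}$, and using that each $\iota_{k}$ is a $d_{\Hilb}$-isometry, one gets $d_{\Hilb}(G(A),G(B))\le\delta(\Phi^{*})\,\delta(\Phi)\,d_{\Hilb}(A,B)$. By the Birkhoff--Hopf theorem (Theorem~\ref{thm:Birkhoff}) it then suffices to check $\Delta(\Phi)<\infty$ and $\Delta(\Phi^{*})<\infty$. This I would obtain by a compactness argument: since $\Phi$ is positivity improving, the maps $A\mapsto\lambda_{\min}(\Phi(A))$ and $A\mapsto\lambda_{\max}(\Phi(A))$ are continuous and, respectively, strictly positive and finite on the compact set $\overline{\DN{n}}$, so there are constants $0<c\le C$ with $c\,\Id_{m}\le\Phi(A)\le C\,\Id_{m}$ for all $A\in\overline{\DN{n}}$; by the scaling invariance of $d_{\Hilb}$ this gives $d_{\Hilb}(\Phi(A),\Phi(B))\le 2\log(C/c)$ for all $A,B\in\MPP{n}$, whence $\Delta(\Phi)<\infty$, and the same argument applies to $\Phi^{*}$. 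Consequently $\delta(\Phi),\delta(\Phi^{*})<1$, the contraction mapping principle applies on the complete space of rays, and $G$ has a unique fixed ray $[\varphi_0]$, say $G(\varphi_0)=\mu\varphi_0$ with $\mu>0$.

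It remains to identify $\mu$ and to complete existence and uniqueness. Put $\varphi_1:=\bigl(\Phi(\varphi_0^{-1})\bigr)^{-1}\in\MPP{m}$, so that $\Phi(\varphi_0^{-1})=\varphi_1^{-1}$ and $\Phi^{*}(\varphi_1)=\mu\varphi_0$. Pairing the first identity with $\varphi_1$ in the Hilbert--Schmidt inner product and using the defining property of $\Phi^{*}$ (both $\varphi_0^{-1},\varphi_1$ and their images are self-adjoint, $\Phi,\Phi^{*}$ being positive),
\begin{equation*}
m=\tr\bigl(\varphi_1^{-1}\varphi_1\bigr)
 =\scal{\Phi(\varphi_0^{-1}),\varphi_1}
 =\scal{\varphi_0^{-1},\Phi^{*}(\varphi_1)}
 =\mu\,\tr\bigl(\varphi_0^{-1}\varphi_0\bigr)=\mu\,n,
\end{equation*}
so $\mu=m/n$. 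Replacing $(\varphi_0,\varphi_1)$ by $(\varphi_0,\frac1m\varphi_1)$ now makes both displayed equations of the theorem hold, proving existence. For uniqueness, any solution $(\varphi_0,\varphi_1)$ forces $\varphi_1=\frac1m\bigl(\Phi(\varphi_0^{-1})\bigr)^{-1}$ and $G(\varphi_0)=\frac mn\varphi_0$, so $\varphi_0$ lies on the unique fixed ray of $G$; two solutions therefore satisfy $\varphi_0'=c\varphi_0$, and then $\varphi_1'=c\varphi_1$ follows by homogeneity, which is exactly uniqueness up to a common positive multiplicative constant.

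I expect the main obstacle to be the finiteness of the projective diameters $\Delta(\Phi)$ and $\Delta(\Phi^{*})$ --- that is, extracting from mere positivity improvement the strict inequality $\delta(\Phi)\,\delta(\Phi^{*})<1$ that powers the contraction --- since this is where the hypothesis on $\Phi$ is genuinely used and it relies on the compactness input above; by contrast, once the fixed point is in hand, the Hilbert--Schmidt pairing that yields $\mu=m/n$ and the rescaling bookkeeping are short.
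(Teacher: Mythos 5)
Your proof is correct and follows essentially the same route as the paper: a fixed-point argument in Hilbert's projective metric for a composition of $\Phi$, $\Phi^{*}$ and matrix inversions (your $G=\Phi^{*}\circ\iota_m\circ\Phi\circ\iota_n$ is just the inversion-conjugate of the paper's $T=\iota_n\circ\Phi^{*}\circ\iota_m\circ\Phi$), with Birkhoff--Hopf plus the compactness bound on the projective diameter giving the contraction, and the same Hilbert--Schmidt pairing pinning the eigenvalue to $m/n$. If anything, you are slightly more explicit than the paper about why $\Delta(\Phi^{*})<\infty$ and about the uniqueness direction (that any solution of the system must lie on the fixed ray), but the substance is identical.
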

\begin{proof}
We mimic the proof of Theorem 6 in \cite{schr2015}. 
Let us define the matrix inversion as function $i:\MPP{}\to\MPP{}$,
  $i(\rho)=\rho^{-1}$ for every $\rho\in\MPP{}$.
Consider the map $T:\MP{n}\to\MP{n}$ defined as
  $T=i\circ\Phi^{*}\circ i\circ\Phi$.

We show that $T$ is a contraction with respect to the Hilbert metric.
As it was mentioned the matrix inversion is an isometry. 
If $\Phi:\MP{n}\to\MP{m}$ is a contraction then  by Lemma 
  \ref{lem:strpos} the same holds for $\Phi^{*}:\MP{m}\to\MP{n}$,
  so in this case $T$ is a composition of contractions and isometries.

To prove that $\Phi$ is a contraction we estimate its projective diameter. 
Since the Hilbert metric is invariant under scaling by positive scalars 
  we can restrict our attention to $\overline{\DN{n}}$ which leads to the following estimation.
\begin{equation*}
\Delta (\Phi)=\sup\kz{d_{\Hilb}(\Phi(A),\Phi(B))\middle| A,B\in\MPP{n}} 
  =\sup\kz{d_{\Hilb}(\Phi(\rho),\Phi(\rho'))\middle| 
  \rho,\rho'\in\overline{\DN{n}}}
\end{equation*}
The map $\Phi:\M{n}\to\M{m}$ is strictly positive thus it sends all the states 
  in $\overline{\DN{n}}$ to $\MPP{m}$, so the map 
  $(\rho,\rho')\mapsto d_{\Hilb}(\Phi (\rho), \Phi (\rho'))$ is continuous 
  on the compact set $\overline{\DN{n}}\times\overline{\DN{n}}$. 
This implies that $\Delta (m\Phi)<\infty$ and by Theorem \ref{thm:Birkhoff}, 
  we have $\delta (\Phi)=\tanh \gz{\frac{1}{4}\Delta(\Phi)}<1$, which
  means that $\Phi$ is a contraction.

By Banach fixed-point theorem we can conclude that there exists a unique fixed
 ray given by the state $\varphi\in\DN{n}$ such that 
  $T(\varphi)=\lambda\varphi$ for some $\lambda>0$. 

Let us define $\varphi_{1}=\frac{1}{m} (i\circ\Phi)(\varphi)$ and
  $\varphi_{0}=n\Phi^{*}(\varphi_1)$. 
In this case
\begin{equation*}
\varphi_{0}^{-1}=i(\varphi_{0})
  =\frac{m}{n}(i\circ\Phi^{*}\circ i\circ\Phi)(\varphi)
  =\frac{m}{n}T(\varphi)=\frac{\lambda m}{n}\varphi,
\end{equation*}
  therefore
\begin{equation*}
\Phi(\varphi_{0}^{-1})=\frac{\lambda m}{n}\Phi(\varphi)
  =\frac{\lambda}{n}\varphi_{1}^{-1}.
\end{equation*}
Now we can determine the value of the parameter $\lambda$ from the
  following equation.
\begin{align*}
n &=\tr (\Id_n)=\tr(\varphi_0^{-1}\varphi_0)
  =\tr(\varphi_0^{-1}n\Phi^{*}(\varphi_1))
  =n\scal{\varphi_0^{-1},\Phi^{*}(\varphi_1)}\\
&=n\scal{\Phi(\varphi_0^{-1}),\varphi_1}
  =n\tr\Phi(\varphi_0^{-1}) \varphi_1 
  =n\tr \frac{\lambda}{n}\varphi_{1}^{-1} \varphi_1
=\lambda m
\end{align*}
Since $\lambda=\frac{n}{m}$, the equations
  $\Phi (\varphi_0^{-1})=\frac{1}{m} \varphi_1^{-1}$ and
  $\Phi^{*}(\varphi_1)=\frac{1}{n}\varphi_0$ are fulfilled.
\end{proof}

The next theorem formulates the main result of this paper. 
It establishes a separability-preserving transformation in composite systems
  between states and copulas.
We prove that each state in $\DN{mn}$ is connected to a well defined copula. 
This leads to the conclusion that copulas describe the dependence between the 
  subsystems. 
We use the language of completely positive maps and the Choi--Jamio\l{}kowski 
  isomorphism to prove this.

\begin{theorem} 
For any element $\rho\in\DN{mn}$, there exists a unique copula 
  $\tilde{\chi}\in\Ch_{mn}$ such that for any representative $\chi$ 
  of $\tilde{\chi}$ one has $\rho\leftrightsquigarrow\chi$.
\end{theorem}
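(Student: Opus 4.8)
The plan is to push everything through the Choi--Jamio\l{}kowski isomorphism and reduce the statement to the uniqueness part of Theorem~\ref{thm:GP}.

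\emph{Existence.} Given $\rho\in\DN{mn}$, Lemma~\ref{lem:posimprove} says $\Phi_\rho$ is strictly positive, so Theorem~\ref{thm:GP} produces $\varphi_0\in\MPP{n}$ and $\varphi_1\in\MPP{m}$ with $\Phi_\rho(\varphi_0^{-1})=\frac1m\varphi_1^{-1}$ and $\Phi_\rho^{*}(\varphi_1)=\frac1n\varphi_0$. I would then set $A=(\varphi_0^{-1/2})^{\Tra}$ and $B=\varphi_1^{1/2}$: both are self-adjoint and invertible, so $\chi:=(A^{*}\otimes B^{*})\rho(A\otimes B)$ is a state connected to $\rho$. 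Applying Lemma~\ref{lem:choitranform} step by step (and using $\overline{(\varphi_0^{-1/2})^{\Tra}}=\varphi_0^{-1/2}$) one obtains $\Phi_\chi(X)=\varphi_1^{1/2}\Phi_\rho(\varphi_0^{-1/2}X\varphi_0^{-1/2})\varphi_1^{1/2}$, whence $\Phi_\chi(\Id_n)=\frac1m\Id_m$ and $\Phi_\chi^{*}(\Id_m)=\frac1n\Id_n$. By Lemma~\ref{lem:chn} this is precisely the statement that $\chi$ has uniform marginals, i.e. $\chi\in\Ch'_{mn}$; and $\chi\in\DN{mn}$ because congruence preserves positive definiteness and $\tr\chi=\tr(\tr_1\chi)=1$. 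Let $\tilde{\chi}$ be the copula represented by $\chi$. Since every other representative of $\tilde{\chi}$ has the form $(U^{*}\otimes V^{*})\chi(U\otimes V)$ with $U,V$ unitary and connectivity is an equivalence relation, all representatives of $\tilde{\chi}$ are connected to $\rho$.

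\emph{Uniqueness.} Let $\chi,\chi'\in\Ch'_{mn}$ be precopulas, each connected to $\rho$, and write $\chi=(A^{*}\otimes B^{*})\rho(A\otimes B)$. The same bookkeeping with Lemma~\ref{lem:choitranform} gives $\Phi_\chi(X)=B^{*}\Phi_\rho(\overline{A}XA^{\Tra})B$ and, by taking Hilbert--Schmidt adjoints, $\Phi_\chi^{*}(Y)=A^{\Tra}\Phi_\rho^{*}(BYB^{*})\overline{A}$. Substituting the fixed-point equations for $\Phi_\rho$ one checks that the pair $\psi_0:=A^{\Tra}\varphi_0\overline{A}=(\overline{A})^{*}\varphi_0\overline{A}\in\MPP{n}$ and $\psi_1:=B^{-1}\varphi_1(B^{*})^{-1}\in\MPP{m}$ satisfies $\Phi_\chi(\psi_0^{-1})=\frac1m\psi_1^{-1}$ and $\Phi_\chi^{*}(\psi_1)=\frac1n\psi_0$; hence, by the uniqueness clause of Theorem~\ref{thm:GP}, $(\psi_0,\psi_1)$ is \emph{the} pair attached to $\Phi_\chi$. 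On the other hand $\chi$ is a precopula, so Lemma~\ref{lem:chn} shows that $(\Id_n,\Id_m)$ is also such a pair; comparing them, there is $t>0$ with $(\overline{A})^{*}\varphi_0\overline{A}=t\,\Id_n$ and $B^{-1}\varphi_1(B^{-1})^{*}=t\,\Id_m$. A polar-decomposition argument then yields unitaries $U_0,U_1$ with $\overline{A}=\sqrt{t}\,\varphi_0^{-1/2}U_0$ and $B=\frac{1}{\sqrt{t}}\,\varphi_1^{1/2}U_1^{*}$.

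Running the same analysis for $\chi'=(A'^{*}\otimes B'^{*})\rho(A'\otimes B')$ with its own constant $t'$, I get $A'=A\cdot(\alpha S)$ and $B'=B\cdot(\alpha^{-1}R)$ with $\alpha=\sqrt{t'/t}>0$ and $S,R$ unitary (built from $U_0,U_1,U_0',U_1'$ and transposition). The key point is that these two scalars are reciprocal, so the scalar prefactors cancel in
\begin{equation*}
\chi'=(A'^{*}\otimes B'^{*})\rho(A'\otimes B')=(S^{*}\otimes R^{*})(A^{*}\otimes B^{*})\rho(A\otimes B)(S\otimes R)=(S^{*}\otimes R^{*})\chi(S\otimes R),
\end{equation*}
so $\chi\sim\chi'$, i.e. $\chi$ and $\chi'$ represent the same copula. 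Together with existence this proves the theorem. I expect the uniqueness half to be the real obstacle: one must keep the transposes and complex conjugates in Lemma~\ref{lem:choitranform} perfectly straight, and then extract from $(\overline{A})^{*}\varphi_0\overline{A}=t\,\Id_n$ (and its $B$-analogue) that $A$ and $B$ coincide with $\varphi_0^{-1/2}$ and $\varphi_1^{1/2}$ up to a one-sided unitary and up to reciprocal positive scalars --- the reciprocity being exactly what makes the leftover conjugation by $S\otimes R$ a bona fide $\sim$-equivalence.
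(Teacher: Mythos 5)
Your proposal is correct, and its existence half is essentially the paper's argument: the paper takes an arbitrary factorization $\varphi_0=\psi_0^{*}\psi_0$, $\varphi_1=\psi_1^{*}\psi_1$, forms $\Theta(X)=\psi_1\Phi_\rho(\psi_0^{-1}X(\psi_0^{-1})^{*})\psi_1^{*}$ and reads off the precopula $\rho_\Theta=((\psi_0^{-1})^{\Tra}\otimes\psi_1)\rho((\psi_0^{-1})^{\Tra}\otimes\psi_1)^{*}$ via Lemma~\ref{lem:choitranform}; your choice $A=(\varphi_0^{-1/2})^{\Tra}$, $B=\varphi_1^{1/2}$ is just the positive-square-root instance of this. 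Where you genuinely diverge is in the uniqueness half, and there your argument is the stronger one. The paper only verifies that two different factorizations of the \emph{same} pair $(\varphi_0,\varphi_1)$ differ by unitaries and hence yield $\sim$-equivalent precopulas --- i.e.\ that the construction is well defined --- and leaves implicit the converse needed for the stated uniqueness: that \emph{every} precopula connected to $\rho$ arises from the construction. You prove exactly that converse, by transporting the Georgiou--Pavon fixed-point pair along the connecting congruence (the identities $\Phi_\chi(X)=B^{*}\Phi_\rho(\overline{A}XA^{\Tra})B$ and $\Phi_\chi^{*}(Y)=A^{\Tra}\Phi_\rho^{*}(BYB^{*})\overline{A}$ check out against Lemma~\ref{lem:choitranform}), comparing the transported pair with $(\Id_n,\Id_m)$ via the uniqueness clause of Theorem~\ref{thm:GP} applied to $\Phi_\chi$, and then using polar decomposition; your observation that the two scalar ambiguities are reciprocal and cancel in $A^{*}\otimes B^{*}$ is precisely what makes the residual conjugation by $S\otimes R$ a legitimate $\sim$-equivalence. (One small bookkeeping point: the scaling freedom in Theorem~\ref{thm:GP} is $(\varphi_0,\varphi_1)\mapsto(c\varphi_0,c\varphi_1)$ with the \emph{same} constant, which is what licenses your single $t$ in both equations --- you got this right.) In short, your proof uses the same machinery but closes a gap that the paper's uniqueness paragraph does not explicitly address.
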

\begin{proof}
Let $\rho\in\DN{mn}$ be arbitrary. 
By Lemma \ref{lem:posimprove}, $\Phi_{\rho}:\M{n}\to\M{m}$ is 
  strictly positive. 
Theorem \ref{thm:GP} guarantees that there exist unique matrices
  $\varphi_0\in\MPP{n}$ and $\varphi_1\in\MPP{m}$ up to a positive 
  multiplicative constant such that
\begin{align*}
\Phi_{\rho} (\varphi_0^{-1}) &= \frac{1}{m} \varphi_1^{-1} \\
\Phi_{\rho}^{*}(\varphi_1) &= \frac{1}{n} \varphi_0.
\end{align*}
Taking any factorization $\varphi_0=\psi_0^{*}\psi_0$, 
  $\varphi_1=\psi_1^{*}\psi_1$, for the map 
\begin{equation*}
\Theta:\M{n}\to\M{m}\qquad
X\mapsto   \psi_1\Phi_{\rho}\gz{\psi_0^{-1} X \gz{\psi_0^{-1}}^{*}}\psi_1^{*}
\end{equation*}
we have
\begin{align*}
\Theta (\Id_n) &= \frac{1}{m}\Id_m \\
\Theta^{*} (\Id_m) &= \frac{1}{n}\Id_n.
\end{align*}
Using Lemma \ref{lem:chn} we conclude that the state $\rho_{\Theta}\in\DN{mn}$
   has uniform marginals, so $\rho_{\Theta}\in\Ch_{mn}'$.
According to Lemma \ref{lem:choitranform},
\begin{equation*}
\rho_{\Theta} = \gz{\gz{\psi_0^{-1}}^{\Tra}\otimes\psi_1}
\rho\gz{\gz{\psi_0^{-1}}^{\Tra}\otimes\psi_1}^{*}.
\end{equation*}

On the other hand, factorization of $\varphi_0$ and $\varphi_1$ is unique up to 
  unitary equivalence. 
If $\varphi_0=\psi_0'^{*}\psi_0'$, $\varphi_1=\psi_1'^{*}\psi_1'$, 
  then there exists unitaries $U_0\in\M{n}$ and $U_1\in\M{m}$ such that 
  $\psi_0'=U_0\psi_0$ and $\psi_1'=U_1\psi_1$. 
The composite state corresponding to the transform defined by $\varphi_0'$ 
  and $\varphi_1'$ is
\begin{align*}
\gz{\gz{\psi_0'^{-1}}^{\Tra}\otimes\psi_1'}\rho
\gz{\gz{\psi_0'^{-1}}^{\Tra}\otimes\psi_1'}^{*} 
=\gz{\gz{U_0^{-1}}^{\Tra}\otimes U_1}\rho_\Theta 
\gz{\gz{U_0^{-1}}^{\Tra}\otimes U_1}^{*}
\end{align*}
  and therefore represents the same copula as $\rho_{\Theta}$.
\end{proof}

The proof of the previous theorem shows that for any state $\rho\in\DN{nm}$ 
  there are unique matrices $A$ and $B$ up to multiplication by unitaries 
  such that $(A\otimes B)\rho(A\otimes B)^{*}$ is a precopula.
Similar phenomena can be discovered in Sinkhorn's theorem 
  (Theorem \ref{thm:Sinkhorn}), where for a 
  matrix $D$ with strictly positive elements there exist diagonal
  matrices $A$ and $B$ up to a positive multiplicative constant
  (i. e. one can take the matrices $cA$ and $\frac{1}{c}B$ for $c\in\R^{+}$)
  such that the matrix $ADB$ is doubly stochastic.

\section{Conclusions}

We introduced the notion of copula as an equivalence class of those states 
  which have uniform marginals and we proved that for every state there
  exists a unique copula connected to the state.
The construction of the copula of a given state uses Banach fixed-point 
  theorem, so we do not have explicit formula for the corresponding copula, 
  but numerically can be computed with arbitrary precision.
We have found very fast convergence for randomly generated qubit-qubit states.

The following are examples for connecting open question to copulas.
How one can reconstruct a state from its copula and its marginals?
How one can parametrize the space of copulas? 
Which quantities are hereditary from states to copulas, namely
  for given states $D_{1},D_{2}$ and the corresponding copulas 
  $\chi_{1},\chi_{2}$ for which quantity $H$ holds the equality
  $H(D_{1},D_{2})=H(\chi_{1},\chi_{2})$.
Is it true for some kind of entropy or distance measure?

\section*{Acknowledgement}

We wish to thank P\'eter Vrana for his contribution to this project, 
  drawing our attention to Sinkhorn's theorem and for inspiring personal 
  discussions. Authors are also grateful to Michael Hall for bringing their attention
  to the usefulness of copula in the theory of quantum steering. 

Both of the authors enjoyed the support of the 
  Hungarian National Research, Development and Innovation Office (NKFIH) grant 
  no. K124152 and the Hungarian Academy of Sciences Lend\"ulet-Momentum grant 
  for Quantum Information Theory, no. 96 141. 
The first author was also supported by the Lend\"ulet grant LP 2015-6 of the 
  Hungarian Academy of Sciences.

\bibliographystyle{plain} 



\end{document}